\def\MODE{2}
\newcommand{\1}{\mathbf{1}}
\newcommand{\0}{\mathbf{0}}
\newcommand{\Pp}{\mathcal{P}}
\newcommand{\Tc}{\mathcal{U}}
\newcommand{\U}{\mathcal{V}}
\newcommand{\K}{\mathcal{K}}
\newcommand{\V}{\mathcal{W}}	
\newcommand{\G}{\mathcal{G}}	
\newcommand{\sbmat}[1]{\left(\begin{smallmatrix}#1\end{smallmatrix}\right)}
\newcommand{\squeezemat}[1]{\addtolength{\arraycolsep}{-#1}}
\newcommand\blfootnote[1]{%
	\begingroup
	\renewcommand\thefootnote{}\footnote{#1}%
	\addtocounter{footnote}{-1}%
	\endgroup
}
\begin{document}
	
\title{Explicit Agent-Level Optimal Cooperative Controllers for
	Dynamically Decoupled Systems with Output Feedback}

\if\MODE1
\author{Mruganka Kashyap \and Laurent Lessard}
\else
\author{Mruganka Kashyap \and Laurent Lessard}\fi

\note{Submitted to CDC 2019}
\maketitle



\begin{abstract}
	We consider a dynamically decoupled network of agents each with a local output-feedback controller. We assume each agent is a node in a directed acyclic graph and the controllers share information along the edges in order to cooperatively optimize a global objective. We develop explicit state-space formulations for the jointly optimal networked controllers that highlight the role of the graph structure. Specifically, we provide generically minimal agent-level implementations of the local controllers along with intuitive interpretations of their states and the information that should be transmitted between controllers.
\end{abstract}

\if\MODE1\else

\blfootnote{M.~Kashyap and L.~Lessard are with the Department of Electrical and Computer Engineering, at the University of Wisconsin--Madison, Madison, WI~53706, USA. L.~Lessard is also with the Wisconsin Institute for Discovery at the University of Wisconsin--Madison.
	\texttt{\{mkashyap2,laurent.lessard\}@wisc.edu}\\[1mm]
	This material is based upon work supported by the National Science Foundation under Grant No.~1710892.}
\fi

\section{Introduction}\label{sec:intro}

Dynamically decoupled systems are examples of multi-agent systems where each agent has independent dynamics but the agents' controllers share information via a communication network to achieve a common goal.
Dynamically decoupled architectures occur naturally in the context of swarms of unmanned aerial vehicles (UAVs). For example, a swarm of UAVs might be deployed to survey an uncharted region or to optimize geographic coverage to combat a forest fire.

\begin{figure}[ht]
	\centering
	\includegraphics{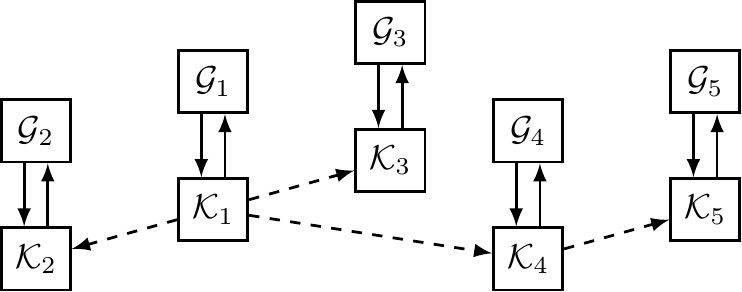}
	\caption{Example of a dynamically decoupled system. The plants $\mathcal{G}_i$ each have controllers $\mathcal{K}_i$ that share information instantaneously along a directed acyclic graph. The edge $\mathcal{K}_1 \rightarrow \mathcal{K}_5$ is implied but not shown.}
	\label{fig:cartoon}
\end{figure}

\noindent In this paper, we abstract the communication network as a directed acyclic graph (DAG). An example with five nodes is shown in Figure~\ref{fig:cartoon}. Each plant $\mathcal{G}_i$ is assumed to obey state-space linear dynamics of the form
\begin{subequations}\label{eq:dynamic_ss}
	\begin{align}
		\dot{x}_i &= A_{ii}x_i + B_{1_{ii}}w_i + B_{2_{ii}}u_i\\
		y_i &= C_{2_{ii}}x_i + D_{{21}_{ii}}w_i,
	\end{align}
\end{subequations}
where $x_i(t)\in\R^{n_i}$, $u_i(t)\in\R^{m_i}$, and $y_i(t)\in\R^{p_i}$ are the state, input, and measurement for each agent, and $w_i(t)\in\R^{q_i}$ is exogenous standard Gaussian noise, assumed to be independent across agents and time.

The global cooperative task is to optimize a standard infinite-horizon quadratic objective that couples the states and inputs of all agents. Specifically, if we form the global state by aggregating agents $\{1,\dots,N\}$ as $x^\tp \defeq \bmat{x_1^\tp & \cdots & x_N^\tp}$ and similarly for the aggregated $u$, $y$, and $w$, the cost function is of the form
\begin{equation}\label{eq:cost_function}
\J = \lim_{T\to\infty} \frac{1}{T}
\ee \int_{0}^T \normm{ C_1 x(t) + D_{12} u(t) }^2 \,\mathrm{d}t,
\end{equation}
where the expectation $\ee$ is taken with respect to the exogenous noise $w$. If each controller $\mathcal{K}_i$ could measure the past history of all global measurements $y$, then finding the $\mathcal{K}_i$ that optimizes $\mathcal{J}$ in~\eqref{eq:cost_function} would be a standard Linear Quadratic Gaussian (LQG) problem~\cite{ZDG}.

What makes our problem more difficult is that each $\mathcal{K}_i$ only has access to the measurement histories of its \textit{ancestor} nodes in the DAG. In the example of Figure~\ref{fig:cartoon}, $\mathcal{K}_1$ only measures $y_1$, $\mathcal{K}_3$ measures $\{y_1,y_3\}$, $\mathcal{K}_5$ measures $\{y_1,y_4,y_5\}$, etc. All measurements are instantaneous.

\paragraph{Contributions.} We present an \textit{agent-level} solution to the dynamically decoupled output-feedback problem. Specifically, we provide state-space realizations for the individual $\mathcal{K}_i$. Moreover, we provide intuitive interpretations for the controller states that highlight the role of the underlying graph structure and characterize the signals that should be transmitted between controllers.
This work builds on~\cite{kim2012optimal,kim2015explicit}, which reported a state-space realization of the aggregated optimal controller but no agent-level details.
Our results are presented in continuous time with an infinite-horizon cost but can easily be generalized to discrete time and/or a finite-horizon cost.

\paragraph{Literature review.}
The elegant solutions found for optimal centralized control do not generally carry over to the decentralized case. It was shown for example that under LQG assumptions, linear compensators can be strictly suboptimal~\cite{witsenhausen1968counterexample}. Several related decentralized control problems were shown to be intractable~\cite{blondel}.

Tractability is closely related to how information is shared between the different controllers. For LQG problems with \textit{partially nested} information, the optimal controller is linear \cite{ho1972team}. When the control structure is \textit{quadratically invariant} with respect to the plant, the problem of finding the optimal linear controller can be cast as a convex optimization over a set of stable Youla parameters~\cite{rotkowitz2002decentralized, rotkowitz2005characterization}. 
Under mild assumptions, convexity implies the existence of a unique optimal controller and efficient numerical methods for finding it. Numerical approaches include vectorization~\cite{rotkowitz_vectorization,vamsi_elia} and Galerkin-style finite-dimensional approximation~\cite{scherer02,voulgaris_stabilization}.

The optimal decentralized controller often carries a rich structure, similar to the estimator-controller separation structure of classical centralized LQG control~\cite{wonham1968separation}. Such structure is lost in the aforementioned numerical approaches, which prompted recent efforts to solve these problems in an \textit{explicit} way that uncovers the structure and intuition reminiscent of the centralized theory.

Explicit solutions for problems that are both partially nested and quadratically invariant have appeared for a variety of special cases. For example, state-feedback problems over general graphs where agent dynamics are coupled according to the graph structure~\cite{shah2013cal,spdel,swigart_spectral}. Some output-feedback structures have also been solved, including two-player~\cite{lessard2015optimal}, broadcast~\cite{lessard2012decentralized}, triangular~\cite{tanaka2014optimal}, and dynamically decoupled~\cite{kim2012optimal,kim2015explicit} cases. In the aforementioned works, explicit solutions typically consist of a state-space realization for the structured map $y \mapsto u$ between the global measurements $y$ and the global inputs $u$.

In this paper, we look at the dynamically decoupled case and drill down even further. We provide an \textit{agent-level} solution, which is an explicit state-space realization of the optimal controllers for each of the agents that is both minimal and easily interpretable.
Section~\ref{sec:notation} covers notation, formal assumptions, and other preliminaries. Section~\ref{sec:result} presents our alternative form for the optimal controller, followed by interpretations of the controller states and signals in Section~\ref{sec:diss} and the explicit agent-level structure in Section~\ref{sec:impl}. We conclude in Section~\ref{sec:conclufuture}.  

\section{Preliminaries}\label{sec:notation}

\paragraph{Notation.} A square matrix is \textit{Hurwitz} if all eigenvalues have a strictly negative real part. $\Rp$ is the set of real rational proper transfer matrices and $\RHinf$ is the set of real-rational proper transfer matrices with no poles on the imaginary axis and no unstable poles. Similarly, $\RHtwo$ is the set of real-rational strictly proper transfer matrices with no poles on the imaginary axis and no unstable poles. We use the state-space notation
\begin{align*}
	\G(s) = \left[\begin{array}{c|c}%
		A&B\\ \hlinet C&D\end{array}\right]= D+C(sI-A)^{-1}B.
\end{align*}
If the realization above is stabilizable and detectable, then $\mathcal{G} \in \RHinf$ if and only if $A$ is Hurwitz. The $\Htwo$ norm of $\mathcal{G}$ is $\norm{\mathcal{G}}_2^2 = \frac{1}{2\pi}\int_{-\infty}^{\infty} \trace(\mathcal{G}^*(jw)\mathcal{G}(jw))\,\mathrm{d}w$. All transfer matrices are represented using calligraphic symbols.

We use $N$ to denote the total number of agents and $[N] \defeq \{1,\dots,N\}$. The global state dimension is $n \defeq n_1+\cdots+n_N$ and similarly for $m$ and $p$. $I_k$ will represent an identity matrix of size $k$. $\blkdiag(\{X_i\})$ is the block-diagonal matrix formed by the blocks $\{X_1,\dots,X_n\}$ and $\diag(X)$ is the block-diagonal matrix formed by the diagonal blocks of $X$.
The zeros used throughout are matrix or vector zeros and their sizes are determined from context. The symbol $\star$ represents a possible nonzero value in a matrix with dimensions inferred from context.

For $i\in [N]$, we write $\underline{i} \subseteq [N]$ to denote the \textit{descendants} of node $i$, i.e., the set of nodes $j$ such that there is a directed path from $i$ to $j$. Likewise, we write $\bar{i} \subseteq [N]$ to denote the \textit{ancestors} of node $i$. Similarly, $\bar{\bar{i}}$ and $\underline{\underline{i}}$ denote the \textit{strict ancestors} and \textit{strict descendants}, respectively.
For example, in the graph of Figure~\ref{fig:cartoon}, $\underline{1} = [5]$, $\underline{4} = \{4,5\}$, and $\bar{\bar{5}} = \{1,4\}$.
We also use this notation as a means of indexing matrices. For example, if $X$ is a $N\times N$ block matrix, then we can write:
$X_{2\underline{4}} = \bmat{X_{24} & X_{25}}$.

Finally, we will require the use of specific partitions of the identity matrix. $I_{n} \defeq \blkdiag(\{I_{n_i}\})$ and for each agent $i \in [N]$, we define $E_{n_i} \defeq (I_{n})_{:i}$ (the $i^\textup{th}$ block column of $I_{n}$). Similar to the descendant and ancestor definitions, $n_{\underline{i}}=\sum_{k\in \underline{i}}n_k$ and $n_{\bar{i}}=\sum_{k\in \bar{i}}n_k$. The dimensions of $E_{{n_{\bar{i}}}}$ and $E_{{n_{\underline{i}}}}$ are determined by the context of use.  $1_{n}$ is the $n\times 1$ matrix of $1$'s. The definitions above also hold for other sets of indices, for example if we replace $n$ by $m$ or $p$.

\paragraph{Problem statement.}
Aggregating~\eqref{eq:dynamic_ss} across agents, we can form global dynamics where $x$, $u$, $w$, and $y$ are the global state, input, exogenous noise, and measured output, respectively. We also define the regulated output $z = C_1 x + D_{12}u$. The transfer matrix $(w,u)\mapsto (z,y)$ is
\begin{align*}
	\left[\begin{array}{c c}
		\Tc & \U\\
		\V & \G\end{array}\right] \defeq
	\left[\begin{array}{c|cc}
		A & B_1 & B_2 \\ \hlinet
		C_1 & 0 & D_{12} \\
		C_2 & D_{21} & 0\end{array}\right].
\end{align*}
The dynamically decoupled structure of~\eqref{eq:dynamic_ss} implies that 
$A \defeq \blkdiag(\{A_{ii}\})$ and similarly for $B_1$, $B_2$, $C_2$, $D_{21}$. The cost matrices $C_1$ and $D_{12}$ are not assumed to have any special structure. It follows that $\V$ and $\G$ are block diagonal but $\Tc$ and $\U$ need not be.

The problem is to find the optimal controller $\mathcal{K}$ that solves the optimization problem
\begin{equation}\label{opt}
\begin{aligned}
& \underset{\K \; \stabilizes \; \G}{\minimize}
& & \normm{\,\Tc + \U\K(I-\ \G\K)^{-1}\V\,}_2^2 \\
& \subject
& & \K \in \mathbb{S}_{\Rp}.
\end{aligned}
\end{equation}
We assume the DAG characterizing the information-sharing pattern has an adjacency matrix $S \in \{0,1\}^{N\times N}$, and $\mathbb{S}_{\Rp}$ is the set of proper matrix transfer functions with a block-sparsity pattern corresponding to $S$. Namely,
$\mathbb{S}_{\Rp} \defeq \set{ \K \in \Rp^{m\times p} }{ \K_{ij}=0 \text{ if } S_{ij}=0 }$. Note that the sub-blocks of $\K$ satisfy $\K_{ij} \in \Rp^{m_i \times p_j}$. The optimal controller in this case is known to be linear and finite-dimensional~\cite{kim2015explicit}. So there is no loss of generality in assuming $\K \in \mathbb{S}_{\Rp}$ in~\eqref{opt}.
In the sequel, we make several simplifying assumptions regarding the adjacency matrix and the system matrices.
\begin{assumption}[\textbf{Information sharing structure}]
	\label{Ass:InfoStruct}
	Information is shared instantaneously between agent $i$ and its descendants $\underline{i}$, so there is an implicit edge from $i$ to each $j\in\underline{i}$ (the graph is transitively closed). Thus, any directed cycle can be treated as a single node and the graph becomes a DAG. By relabeling the nodes according to the paritial ordering induced by the DAG, the adjacency matrix $S\in \R^{N\times N}$ becomes lower triangular. $S$ is also invertible because each node has a self-loop.
\end{assumption}

\noindent Assumption~\ref{Ass:InfoStruct} is critical in ensuring that the problem is partially nested~\cite{ho1972team} and quadratically invariant~\cite{rotkowitz2005characterization,rotkowitz_vectorization}, so the optimal controller is linear and finite-dimensional.

\paragraph{Riccati assumptions.} Matrices $(A,B,C,D)$ are said to satisfy the \emph{Riccati assumptions}~\cite{kim2015explicit} if:
\begin{enumerate}[label*=R\arabic*., ref=R\arabic*]
	\item $C^\tp D=0$ and $D^\tp D>0$. \label{harf}
	\item $(A,B)$ is stabilizable.
	\item $\begin{bmatrix}A-j\omega I&B\\C&D\end{bmatrix}$ is full column rank for all $\omega \in \R$.
\end{enumerate}
If the Riccati assumptions hold, there is a unique stabilizing solution to the associated algebraic Riccati equation, which we write as $(X,F)=\ric(A,B,C,D)$. Thus $X \succ 0$ satisfies
$
A^\tp X+XA+C^\tp C -XB(D^\tp D)^{-1}B^\tp X=0
$ with $A+BF$ Hurwitz, where $F\defeq -(D^\tp D)^{-1}B^\tp X$.

\begin{assumption}[\textbf{System assumptions}]
	\label{Ass:System}
	For the $N$ interacting agents, we will assume the following.
	\begin{enumerate}[label*=\arabic{thm}.\arabic*., ref=\arabic{thm}.\arabic*]
		\item As described in Section~\ref{sec:notation}, the system is dynamically decoupled, $A \defeq \blkdiag(\{A_{ii}\})$ and similarly for $B_1$, $B_2$, $C_2$, and $D_{21}$. The cost matrices $C_1$ and $D_{12}$ are not assumed to have any special structure. \label{Ass:System_dynamic}
		\item $A_{ii}$ is Hurwitz for all $i\in[N]$, i.e., $A$ is Hurwitz. \label{Ass:System_hurwitz}
		\item The Riccati assumptions hold for $(A,B_2,C_1,D_{12})$ and for $(A_{ii}^\tp ,C_{2_{ii}}^\tp ,B_{1_{ii}}^\tp ,D_{{21}_{ii}}^\tp )$ for all $i\in[N]$. \label{Ass:System_riccati}
	\end{enumerate}
\end{assumption}
\noindent Assumption~\ref{Ass:System_hurwitz} is an assumption of nominal stability, carried over from Kim and Lall \cite{kim2015explicit}. Assumption~\ref{Ass:System_riccati} ensures the necessary condition that the centralized LQR problem and the individual agents' estimation problems are non-degenerate.

\section{Main Results}\label{sec:result}
Kim and Lall \cite{kim2015explicit} provide an explicit state-space formula for the controller that optimizes~\eqref{opt} under Assumptions~\ref{Ass:InfoStruct}--\ref{Ass:System}. This will be the starting point for our work.
\begin{lem}[{\!\!\cite[Cor.~14]{kim2015explicit}}]\label{lem:1}
	Suppose Assumptions \ref{Ass:InfoStruct}--\ref{Ass:System} hold. 
	The solution to~\eqref{opt} is $\K = \Pp(I+\G \Pp-\diag(\G \Pp))^{-1}$, where the $i^\text{th}$ column of $\Pp\in \mathbb{S}_{\Rp}$ is given by
	\begin{align}\label{eq:Pdef}
		\Pp_{\underline{i}i}
		=\left[\begin{array}{c|c}%
			A_{\underline{ii}}+B_{2_{\underline{ii}}}F^i+\bmat{L^i\\0}C_{2_{i\underline{i}}} & \bmat{-L^i\\0} \\ \hlinet F^i & 0
		\end{array}\right]
	\end{align}
and we define the control and estimation gains as
\begin{subequations}\label{eq:ARE_eq}
	\begin{align}
		(X^i,F^i) &\defeq
		\ric(A_{\underline{ii}},B_{2_{\underline{ii}}},C_{1_{:\underline{i}}},D_{12_{:\underline{i}}}) \\
		(Y^i,{L^i}^\tp) &\defeq
		\ric(A_{ii}^\tp,C_{2_{ii}}^\tp,B_{1_{ii}}^\tp,D_{{21}_{ii}}^\tp).
	\end{align}
\end{subequations}
\end{lem}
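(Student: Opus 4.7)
The plan is to reduce~\eqref{opt} to $N$ decoupled centralized $\mathcal{H}_2$ subproblems, one per column of a Youla-type parameter, and then reassemble. First, Assumption~\ref{Ass:InfoStruct} together with dynamic decoupling makes the sparsity constraint quadratically invariant, and $\G$ is open-loop stable by Assumption~\ref{Ass:System_hurwitz}, so I would introduce the stable parameter $\mathcal{Q}=\K(I-\G\K)^{-1}$ that lies in $\mathbb{S}_{\Rp}\cap\RHinf$ whenever $\K\in\mathbb{S}_{\Rp}$ is internally stabilizing. Quadratic invariance guarantees $\mathcal{Q}$ inherits the block-lower-triangular sparsity of $\K$, and the closed-loop map becomes the affine expression $\Tc+\U\mathcal{Q}\V$, so the optimization is now an unconstrained convex problem over $\mathcal{Q}$ with the induced sparsity pattern.

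Next I would exploit that $\V$ is block-diagonal (only $w_i$ enters $y_i$) to split the $\mathcal{H}_2$ cost additively over the columns of $\mathcal{Q}$:
\[
\bigl\|\Tc+\U\mathcal{Q}\V\bigr\|_2^2 = \sum_{i=1}^N \bigl\|\Tc_{:i}+\U_{:\underline{i}}\,\mathcal{Q}_{\underline{i}i}\,\V_{ii}\bigr\|_2^2,
\]
with the sparsity constraint acting independently on each column. For each $i$, the $i$-th summand is a standard centralized $\mathcal{H}_2$ problem on the subsystem of agent $i$ and its descendants $\underline{i}$, with the single physical measurement $y_i$ and actuation on the inputs $u_{\underline{i}}$, so the columns can be optimized independently.

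For each column I would invoke the classical two-Riccati $\mathcal{H}_2$ solution. The regulator Riccati for the descendant plant $(A_{\underline{ii}},B_{2_{\underline{ii}}},C_{1_{:\underline{i}}},D_{12_{:\underline{i}}})$ returns the gain $F^i$, and the estimator Riccati for agent $i$'s own dynamics $(A_{ii}^\tp,C_{2_{ii}}^\tp,B_{1_{ii}}^\tp,D_{{21}_{ii}}^\tp)$ returns $L^i$; Assumption~\ref{Ass:System_riccati} guarantees both Riccati equations admit stabilizing solutions. The estimator gain enters the $i$-th column only in the top block rows, producing the $\bmat{L^i\\0}$ pattern, because $y_i$ carries no direct information about the strict-descendant states $x_{\underline{\underline{i}}}$; those states are propagated open-loop through their own Hurwitz dynamics (Assumption~\ref{Ass:System_hurwitz}). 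Stitching the observer and regulator together by the separation principle yields the state-space realization stated in~\eqref{eq:Pdef}.

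The main obstacle will be the final algebraic step: the standard inversion $\K=\mathcal{Q}(I+\G\mathcal{Q})^{-1}$ must be rewritten in the advertised form $\K=\Pp(I+\G\Pp-\diag(\G\Pp))^{-1}$, where $\Pp$ packages the per-column realizations so that its diagonal blocks $\Pp_{ii}$ coincide with the local agent controllers $\K_{ii}$ and its strictly lower-triangular blocks carry the inter-agent feed-forward. A direct manipulation that uses the block-diagonality of $\G$ and the triangular sparsity of $\mathcal{Q}$ shows $\K(I-\G\K)^{-1}=\Pp(I-\G\diag(\Pp))^{-1}$, from which the stated formula follows after checking that $I+\G\Pp-\diag(\G\Pp)$ is unit block lower-triangular (hence invertible and sparsity-preserving). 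This verification is bookkeeping-heavy but routine once the per-column structure is in place, and uniqueness of $\K$ then follows from strict convexity of the column-wise subproblems under Assumption~\ref{Ass:System_riccati}.
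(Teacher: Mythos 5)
Note first that this paper does not prove Lemma~\ref{lem:1} at all: it is imported verbatim from Kim and Lall \cite[Cor.~14]{kim2015explicit}, so there is no in-paper proof to compare against. Your sketch follows essentially the same separability route used in that cited work: quadratic invariance plus open-loop stability gives the parameter $\mathcal{Q}=\K(I-\G\K)^{-1}\in\mathbb{S}_{\Rp}\cap\RHinf$, block-diagonality of $\V$ and the column-wise sparsity of $\mathcal{Q}$ split the $\Htwo$ cost into independent column problems, each column problem is a centralized LQG problem on agent $i$ and its descendants with measurement $y_i$ only, and the zero Kalman gain on strict descendants (hence the $\sbmat{L^i\\0}$ pattern) comes from the fact that in column $i$ only $w_i$ drives the subsystem, with Hurwitz $A$ guaranteeing this degenerate filter solution is stabilizing. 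That is the right argument, and your column decomposition identity is correct.

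Two loose ends are worth tightening. First, Assumption~\ref{Ass:System_riccati} only posits the Riccati assumptions for the \emph{global} quadruple $(A,B_2,C_1,D_{12})$, whereas your per-column regulator needs them for $(A_{\underline{ii}},B_{2_{\underline{ii}}},C_{1_{:\underline{i}}},D_{12_{:\underline{i}}})$; this does follow, but requires the short argument that a column subset of a full-column-rank matrix is full column rank (using block-diagonality of $A$, $B_2$) together with stabilizability from $A$ Hurwitz --- it is not literally ``guaranteed by the assumption'' as stated. Second, the step you defer as bookkeeping is actually the crux of the advertised formula: the per-column $\Htwo$ optimum is the Youla parameter of the column LQG controller, i.e.\ $\mathcal{Q}_{\underline{i}i}=\Pp_{\underline{i}i}(I-\G_{i\underline{i}}\Pp_{\underline{i}i})^{-1}$, which with $\G$ block diagonal gives $\mathcal{Q}=\Pp(I-\diag(\G\Pp))^{-1}$, and then
\begin{equation*}
\K=\mathcal{Q}(I+\G\mathcal{Q})^{-1}
=\Pp\bigl((I-\diag(\G\Pp))+\G\Pp\bigr)^{-1}
=\Pp(I+\G\Pp-\diag(\G\Pp))^{-1}.
\end{equation*}
Your proposal states the needed identity (in the equivalent form $\Pp(I-\G\diag(\Pp))^{-1}$, which coincides with the above because $\G$ is block diagonal), so the plan is sound, but this link between the column-wise LQG realization \eqref{eq:Pdef} and the global feedback formula should be derived explicitly rather than asserted; your side remark that $\Pp_{ii}=\K_{ii}$ is true but not needed. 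With those points made precise, your outline reproduces the cited proof rather than offering a genuinely different one.
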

Lemma~\ref{lem:1} provides a state-space realization of the optimal controller. Our first main result is an even more explicit formula for the optimal controller $\K$ that highlights the duality between estimation and control. We will also see in Section~\ref{sec:impl} that this new formula is more readily implementable on an agent-level basis.

\begin{thm}
	\label{thm:2}
	Suppose Assumptions \ref{Ass:InfoStruct}--\ref{Ass:System} hold. A realization for the optimal controller that solves~\eqref{opt} is
	\begin{align}\label{Kopt}
		\K=\left[\begin{array}{c|c}%
			\begin{matrix}
				\bar{A}+\bar{L}\bar{C}+\bar{B}\bar{S}_m\bar{F}\bar{S}_n^{-1}
			\end{matrix} &\begin{matrix}{-\bar{L}\bar{\1}_p}\end{matrix}\\\hlinet\begin{matrix}{\bar{\1}_m^\tp \bar{F}\bar{S}_n^{-1}}\end{matrix}&0\end{array}\right],
	\end{align}
	where we have defined the new symbols
	\begin{gather*}
		\bar{A} \defeq I_{N} \otimes A, \qquad
		\bar{B} \defeq I_N \otimes B_2, \qquad
		\bar{C} \defeq I_N \otimes C_2, \\
		\bar{S}_m \defeq S\otimes I_m, \qquad
		\bar{S}_n \defeq S\otimes I_{n}, \qquad
		\bar{S}_p \defeq S\otimes I_p,\\
		\bar{\1}_m \defeq 1_N\otimes I_m, \qquad
		\bar{\1}_p \defeq 1_N\otimes I_p.
	\end{gather*}
	For all $i\in [N]$, with $F^i$ and ${L^i}^{\tp}$ from~\eqref{eq:ARE_eq}, we also defined
	\begin{gather*}
		{L^{\bar{i}}} \defeq \blkdiag(\{L^j\}_{j \in \bar{i}})\\
		\bar{L} \defeq \blkdiag(\{L_i\})\quad\text{where: }
		L_i\defeq E_{n_{\bar{i}}}L^{{\bar{i}}}E_{p_{\bar{i}}}^\tp \\
		\bar{F} \defeq \blkdiag(\{F_i\})\quad\text{where: }
		F_i \defeq E_{m_{\underline{i}}}F^i E_{n_{\underline{i}}}^\tp.
	\end{gather*}
\end{thm}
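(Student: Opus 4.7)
The strategy is to derive~\eqref{Kopt} directly from the formula $\K = \Pp(I + \G\Pp - \diag(\G\Pp))^{-1}$ of Lemma~\ref{lem:1} by globalizing the column-wise realization of $\Pp$, carrying out the inversion and multiplication in state-space, and finishing with a state similarity transformation involving $\bar{S}_n$.

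First, I would assemble a state-space realization of $\Pp$ on a lifted state space of dimension $Nn$. Each column $\Pp_{\underline{i}i}$ in~\eqref{eq:Pdef} has state dimension $n_{\underline{i}}$; embedding the $i^\text{th}$ column's state into the $i^\text{th}$ block of $\R^{Nn}$ via the descendant selector $E_{n_{\underline{i}}}$, and exploiting the block-diagonal structure of $A$, $B_2$, $C_2$, I can rewrite the stacked dynamics and gains in terms of the Kronecker-product matrices $\bar{A}$, $\bar{B}$, $\bar{C}$ together with agent-level gains built from $\bar{F}$ and $\bar{L}$. The mapping between the column-indexed coordinates and the uniform Kronecker coordinates is encoded precisely by $\bar{S}_n$, which is invertible since $S$ is lower triangular with unit diagonal (Assumption~\ref{Ass:InfoStruct}).

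Second, I would evaluate $(I + \G\Pp - \diag(\G\Pp))^{-1}$ in state-space. Because $\G$ is block-diagonal and column $i$ of $\Pp$ is supported on the descendants $\underline{i}$, the block $(j,i)$ of $\G\Pp$ vanishes whenever $j \notin \underline{i}$; subtracting $\diag(\G\Pp)$ removes the self-loops and leaves only the strict descendant couplings, which correspond exactly to the strict lower-triangular part of $\bar{S}_p - I$. Closing this feedback around $\Pp$ and simplifying produces the observer-like correction $\bar{L}\bar{C}$ in the dynamics together with the input matrix $-\bar{L}\bar{\1}_p$, while the state-feedback term becomes $\bar{B}\bar{S}_m \bar{F}\bar{S}_n^{-1}$ after the $\bar{S}_n$-similarity; the output matrix $\bar{\1}_m^\tp \bar{F}\bar{S}_n^{-1}$ sums the per-agent control contributions to form the global input of dimension $m$, consistent with $\bar{\1}_m = 1_N \otimes I_m$.

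The main technical obstacle will be the bookkeeping needed to reconcile the two indexing conventions at play. The column-wise realization of Lemma~\ref{lem:1} is indexed by \emph{descendants} (each $F^i$ and $L^i$ acts on the subsystem $\underline{i}$), whereas $L_i = E_{n_{\bar{i}}} L^{\bar{i}} E_{p_{\bar{i}}}^\tp$ in Theorem~\ref{thm:2} is indexed by \emph{ancestors}. The key identity to verify is that, after the $\bar{S}_n$-similarity, the aggregated observer corrections arising from all columns whose descendant sets contain agent $i$ reassemble into the ancestor-indexed block $L_i$; the Kronecker products with $S$ and $S^{-1}$ are precisely what realize this reindexing, and checking this carefully is the crux of the argument.
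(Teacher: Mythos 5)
Your overall plan---globalize the column-wise realization of $\Pp$ from Lemma~\ref{lem:1} onto an $Nn$-dimensional lifted state space, evaluate $(I+\G\Pp-\diag(\G\Pp))^{-1}$ in state space, cascade, and finish with $\bar S_n$-based similarity transforms and removal of uncontrollable modes---is indeed the skeleton of the paper's argument. But you have located the crux and then waved it through. The passage from the per-column gains (each column of $\Pp$ uses only agent $i$'s own Kalman gain $L^i$, so the natural aggregation has observer term $\tilde L\bar C$ with $\tilde L \defeq \blkdiag(\{E_{n_i}L^iE_{p_i}^\tp\})$) to the ancestor-stacked gain $\bar L$ (whose $i^{\text{th}}$ block packs $L^j$ for \emph{all} $j\in\bar i$) is \emph{not} a reindexing achieved by Kronecker products with $S$ and $S^{-1}$, and it is not produced by closing the $\G$-feedback either. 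A direct computation shows $\bar S_n\tilde L\bar S_p^{-1}\neq\bar L$ in general (the left side is not even block diagonal), so no $\bar S_n$-similarity of the cascade can convert the $\tilde L$-form dynamics into the $\bar L$-form dynamics of~\eqref{Kopt}. If you carry out your steps literally, you end up with a correct but different-looking realization whose $A$-matrix contains $\tilde L\bar C$ rather than $\bar L\bar C$, and the claimed form does not follow.

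The missing idea is the paper's realization-equivalence step: because the lifted realization of $\Pp$ is deliberately non-minimal (each column was padded with simultaneously uncontrollable and unobservable modes), one may replace the gain $\tilde L$ in the $A$-matrix by $L_x\defeq\bar S_n^{-1}\bar L\bigl(\bar S_p-\bar\1_p(\bar\1_p^\tp-\diag(\bar\1_p^\tp))\bigr)$ \emph{without changing the transfer function}; the two realizations differ only on hidden modes. Proving this requires an explicit argument---the paper uses the Woodbury identity to write the difference as a correction term $\mathcal{M}(s)$ and then shows $\mathcal{M}(s)=0$ from the block-sparsity pattern of the observable/unobservable partition. Only after this gain swap do the subsequent cascade, the transforms $T_1,T_2,T_3$, and identities such as $\bar S_n\tilde L\bar\1_p=\bar L\bar\1_p$ collapse the realization to~\eqref{Kopt}. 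Your proposal needs either this hidden-mode equivalence argument or a genuinely different justification of equal substance for how each agent's block comes to involve its ancestors' estimation gains; asserting that ``checking the reindexing carefully'' suffices leaves the essential step unproved.
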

\begin{proof}
	See Appendix \ref{sec:appendix} for a complete proof.
\end{proof}
\begin{rem}\label{rem:minimality}
	The realization~\eqref{Kopt} of Theorem~\ref{thm:2} is not minimal. Indeed, the gain matrices $L_i$ and $F_i$ have been padded with zeros so that they have compatible dimensions. We have $\bar A \in \R^{nN\times nN}$, where recall $N$ is the number of agents and $n \defeq n_1+\dots+n_N$ is the global aggregate state dimension of all agents. We use this non-minimal form to simplify algebraic manipulations; we present a generically minimal realization of~\eqref{Kopt} in Section \ref{sec:impl}.
\end{rem}
Applying the state transformation $x \mapsto \bar S_n x$ to~\eqref{Kopt}, we obtain a dual realization for the optimal controller.
\begin{cor}\label{corr}
	Suppose Assumptions \ref{Ass:InfoStruct}--\ref{Ass:System} hold. A realization for the optimal controller that solves~\eqref{opt} is
	\begin{align}\label{Kopt2}
		\K ={\left[\begin{array}{c|c}%
				\begin{matrix}\bar{A}+\bar{S}_n^{-1}\bar{L}\bar{S}_p\bar{C}+\bar{B}\bar{F}\end{matrix}&\begin{matrix}-\bar{S}_n^{-1}\bar{L}\bar{\1}_p\end{matrix}\\\hlinet {\begin{matrix}\bar{\1}_m^\tp \bar{F}\end{matrix}}&0\end{array}\right]},
	\end{align}
	where the notation is the same as in Theorem~\ref{thm:2}.
\end{cor}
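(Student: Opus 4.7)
The plan is to compute the similarity transformation of the realization in Theorem~\ref{thm:2} directly, and then collapse the resulting expressions using Kronecker product identities. Recall that for a realization $\K=\left[\begin{smallmatrix} A & B \\ C & 0\end{smallmatrix}\right]$, a change of state variable determined by the invertible matrix $T$ produces the equivalent realization $\left[\begin{smallmatrix} T^{-1}AT & T^{-1}B \\ CT & 0\end{smallmatrix}\right]$. I will apply this with $T=\bar S_n$, which is invertible by Assumption~\ref{Ass:InfoStruct} (since $S$ is invertible, so is $\bar S_n = S\otimes I_n$).

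The main step is to simplify the three ``inner'' terms of $T^{-1}(\bar A + \bar L\bar C + \bar B \bar S_m \bar F \bar S_n^{-1})T$ one by one. The key identities I will use all come from the Kronecker structure $\bar A = I_N\otimes A$, $\bar B = I_N\otimes B_2$, $\bar C = I_N\otimes C_2$, $\bar S_n = S\otimes I_n$, $\bar S_m = S\otimes I_m$, $\bar S_p = S\otimes I_p$, together with the mixed-product rule $(P\otimes Q)(R\otimes T)=(PR)\otimes (QT)$. This immediately yields
\begin{align*}
\bar S_n^{-1}\bar A\bar S_n &= \bar A, &
\bar S_n^{-1}\bar B\bar S_m &= \bar B, &
\bar C\bar S_n &= \bar S_p \bar C,
\end{align*}
since the first-factor multiplication is trivial ($S^{-1}S=I_N$, $S^{-1}\!\cdot\! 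I_N\!\cdot\! S=I_N$, $I_N\!\cdot\! S=S\!\cdot\! I_N$). Substituting these identities term by term, the similarity-transformed $A$ matrix becomes
\[
\bar A + \bar S_n^{-1}\bar L\bar S_p\bar C + \bar B\bar F,
\]
which matches the $A$ block of~\eqref{Kopt2}. The $B$ and $C$ blocks transform to $T^{-1}(-\bar L\bar\1_p)=-\bar S_n^{-1}\bar L\bar\1_p$ and $(\bar\1_m^\tp \bar F\bar S_n^{-1})T = \bar\1_m^\tp\bar F$ respectively, again matching~\eqref{Kopt2}.

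Since Theorem~\ref{thm:2} has already established that~\eqref{Kopt} is a valid realization of the optimal controller, and a similarity transformation preserves the input/output map, the resulting realization~\eqref{Kopt2} also represents the optimal controller. There is no real ``hard part'' to this argument; the only thing to be careful about is tracking the dimensions of the identity factor inside each $\bar S$ symbol, which is what makes $\bar C\bar S_n = \bar S_p\bar C$ a nontrivial statement (the $I_n$ factor on the left is swapped for an $I_p$ factor on the right via $\bar C$). This dimensional bookkeeping is the whole content of the corollary, and it is exactly why the new realization exposes the estimation gain $\bar L$ alongside $\bar S_p$ rather than $\bar S_m$ with $\bar F$, highlighting the control/estimation duality emphasized in Section~\ref{sec:result}.
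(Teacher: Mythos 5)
Your proposal is correct and follows exactly the paper's route: the paper obtains Corollary~\ref{corr} precisely by applying the state transformation $\bar S_n$ to the realization~\eqref{Kopt} of Theorem~\ref{thm:2}, and your Kronecker-product identities ($\bar S_n^{-1}\bar A\bar S_n=\bar A$, $\bar S_n^{-1}\bar B\bar S_m=\bar B$, $\bar C\bar S_n=\bar S_p\bar C$, with $\bar S_n$ invertible since $S$ is) are the same facts the paper relies on, merely spelled out in detail.
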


\section{Interpreting the optimal controller}\label{sec:diss}
\paragraph{State interpretation.}
Based on Theorem \ref{thm:2}, a realization of the controller (calling the state $\xi$) is
\begin{subequations}\label{eq:state}
	\begin{align}
		\dot{\xi} &= (\bar{A}+\bar{L}\bar{C}+\bar{B}\bar{S}_m\bar{F}\bar{S}_n^{-1})\xi-\bar{L}\bar{\1}_p y\\
		u &= \bar{\1}_m^\tp \bar{F}\bar{S}_n^{-1}\xi.
	\end{align}
\end{subequations}
We call $\xi$ the \textit{state} coordinates. Informally, $\xi_i$ can be interpreted as agent $i$'s best estimate of the global state $x$ given the available information $y_{\bar i}$.
Likewise, a realization from Corollary \ref{corr} with state $\eta$ is
\begin{subequations}\label{eq:innov}
	\begin{align}
		\dot{\eta} &= (\bar{A}+\bar{S}_n^{-1}\bar{L}\bar{S}_p\bar{C}+\bar{B}\bar{F})\eta-\bar{S}_n^{-1}\bar{L}\bar{\1}_p y\\
		u &= \bar{\1}_m^\tp \bar{F} \eta.
	\end{align}
\end{subequations}
We call $\eta$ the \textit{innovation} coordinates. Informally, $\eta$ captures improvements in state estimates as information is aggregated along the DAG.
A block-diagram of the global controller is shown in Figure~\ref{fig:global}.

\begin{figure*}
	\centering
	\includegraphics{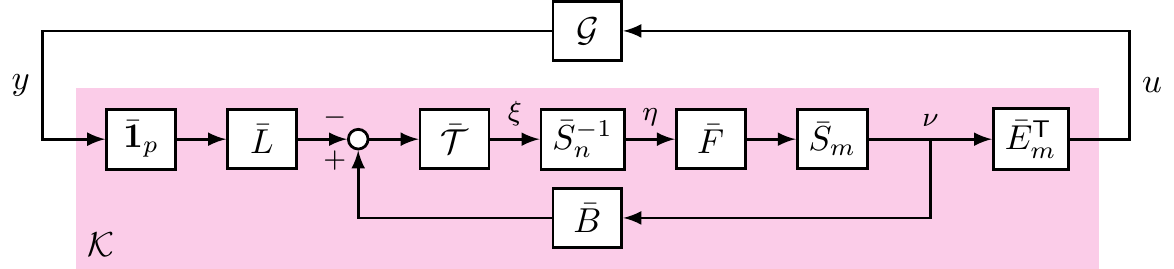}
	\caption{Block diagram representation of the global plant $\G$ and global controller $\K$ from Theorem \ref{thm:2}. Here, the (block-diagonal) estimator dynamics are contained in $\bar{\mathcal{T}} \defeq (sI-\bar A-\bar L\bar C)^{-1}$ and the remaining blocks of $\K$ are static.}\label{fig:global}
\end{figure*}

\paragraph{Relationship to posets.}\label{sec:poset}

The work by Shah and Parrilo~\cite{shah2011optimal,shah2013cal} considers a different version of this decentralized control problem but obtains a similar optimal structure. The authors consider agents communicating on a DAG, except all agents use \textit{state feedback} rather than output feedback, and $A$ is not required to be block-diagonal; it may have a block-sparsity pattern conforming to that of the DAG. The resulting controller for this problem has a controller-estimator structure very similar to the one in Figure~\ref{fig:global}.
What the state-feedback coupled case and the output-feedback decoupled case share in common is the notion of \textit{separability}~\cite{kim2015explicit}. The more general output-feedback case with dynamic coupling is \textit{not} separable and the control and estimation gains are generally coupled in that case~\cite{lessard2015optimal}.
Nevertheless, the fact that a state-feedback and output-feedback variants of the problem share a similar optimal controller structure suggests that there may exist an even broader class of control problems sharing this interesting structure.

\section{Agent-level implementation}\label{sec:impl}

In this section, we further refine the global controller structure of Figure~\ref{fig:global} to obtain agent-level implementations of the optimal controller.

The realizations~\eqref{eq:state} and~\eqref{eq:innov} describe the optimal controller dynamics in terms of the global state $\xi$ or $\eta$. Define the partial input $\nu \defeq \bar S_m \bar F \bar S_n^{-1} \xi$, and further isolate the dynamics of each individual agent by writing the $i^\text{th}$ block-component of~\eqref{eq:state} as
\begin{subequations}\label{eq:state_agent}
	\begin{align}
		\label{eq:state_agent_a}
		\dot \xi_i &= (A+ L_i C_2)\xi_i + B_2 \nu_i- L_i y \\
		\label{eq:state_agent_b}
		u_i &= E_{m_i}^{\tp} \nu_i = \nu_{i,i} \\
		\nu_i &= \sum_{j\in \bar{i}} F_j(\bar S_n^{-1} \xi)_j.\label{eq:state_agent_c}
	\end{align}
\end{subequations}
Note that~\eqref{eq:state_agent} can be implemented by agent $i$ because updates only require information from ancestors $\bar i$. Specifically, each $\nu_i$ is a function of $\xi_{\bar i}$ alone. Moreover, \eqref{eq:state_agent_a} does not depend on the full $y$, but rather only $y_{\bar i}$, since $L_i\defeq E_{n_{\bar{i}}}L^{\bar{i}}E_{p_{\bar{i}}}^\tp$ (defined in Theorem~\ref{thm:2}).

We will now subdivide~\eqref{eq:state_agent} to obtain agent-level update equations for each of the agents. It turns out agent $i$ only needs to store $\xi_{i,\underline i}$, which are the components of $\xi_i$ corresponding to the descendants of $i$.

By isolating these components, \eqref{eq:state_agent_a} reduces to
\begin{subequations}\label{agent:all}
	\begin{align}\label{agent:1}
		\dot \xi_{i,\underline{i}} &= A_{\underline{ii}} \xi_{i,\underline{i}} + B_{2_{\underline{ii}}} \nu_{i,\underline{i}}- E_{n_{\underline{i}}}^\tp E_{n_{{i}}} L^i (y_i - C_{2_{ii}} \xi_{i,i}).
	\end{align}
	Express~\eqref{eq:state_agent_c} in a more convenient form by expanding:
	$
	\nu_i = \sum_{j\in\bar i} F_j (\bar S_n^{-1}\xi)_j
	= \sum_{j\in\bar i} F_j \sum_{k \in\bar j} \bar S_n^{-1}(j,k) \xi_k
	$. Now define the summand $\tilde \nu_j \defeq F_j \sum_{k \in\bar j} \bar S_n^{-1}(j,k) \xi_k$. Since $F_i \defeq E_{m_{\underline i}} F^i E_{n_{\underline i}}^\tp$ (defined in Theorem~\ref{thm:2}), we can write $\nu_i$ compactly as: $\nu_i = \sum_{k\in\bar i} E_{m_{\underline k}} \tilde\nu_{k,\underline k}$, which leads to the following expression for $\nu_{i,\underline i}$.
	\begin{equation}\label{agent:2}
	\nu_{i,\underline i} = \tilde \nu_{i,\underline i} + E_{m_{\underline i}}^\tp \sum_{ k \in \bar{\bar{i}}} E_{m_{\underline k}} \tilde\nu_{k,\underline k}.
	\end{equation}
	Furthermore, each $\tilde \nu_{i,\underline i}$ is computable from the information available to agent $i$. To see why, compute:
	\begin{multline}\label{agent:3}
		\tilde \nu_{i,\underline i}
		= E_{m_{\underline i}}^\tp F_i \sum_{k \in\bar i} \bar S_n^{-1}(i,k) \xi_k
		= F^i \sum_{k \in\bar i} \bar S_n^{-1}(i,k) \xi_{k,\underline i} \\
		= F^i \sum_{k \in\bar i} \bar S_n^{-1}(i,k) E_{n_{\underline i}}^\tp E_{n_{\underline k}} \xi_{k,\underline k}.
	\end{multline}
\end{subequations}
In summary, the optimal agent-level controller for agent $i$ maintains a state $\xi_{i,\underline i}$ and uses update equations~\eqref{agent:all} and \eqref{eq:state_agent_b}. Agent $i$ receives $\{\xi_{k,\underline{k}}\}_{k \in \bar{\bar{i}}}$ and $\{\tilde{\nu}_{k,\underline{k}}\}_{k \in \bar{\bar{i}}}$ from its strict ancestors $\bar{\bar i}$, and transmits $\xi_{i,\underline{i}}$ and $\tilde{\nu}_{i,\underline{i}}$ to its strict descendants $\underline{\underline i}$. A block diagram of the agent-level structure is shown in Figure~\ref{fig:local}.

\begin{figure*}[htb]
	\centering
	\raisebox{5mm}{\includegraphics{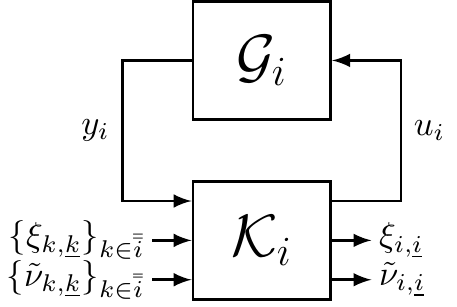}}
	\hfill
	\includegraphics{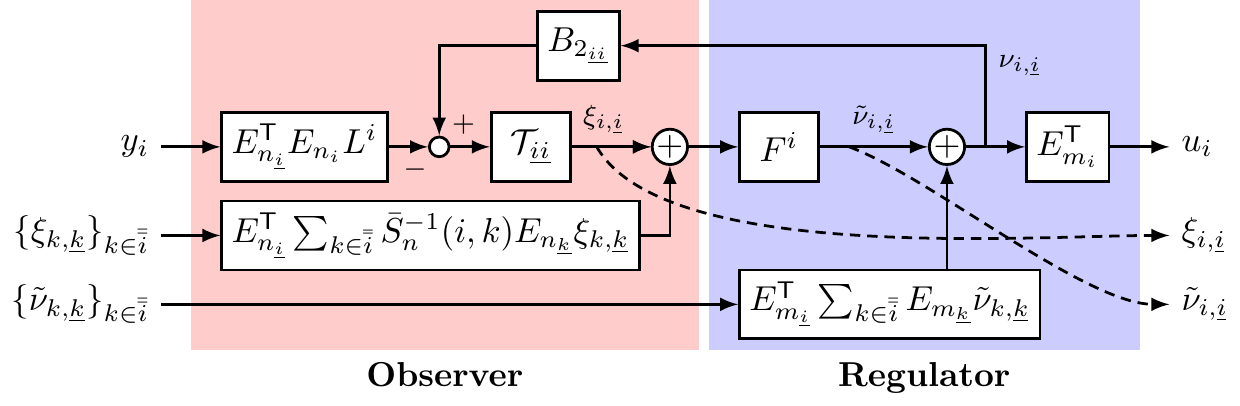}
	\vspace{-2mm}
	\caption{Block diagram representation of a local plant $\G_i$ and associated controller $\K_i$ (left) and the Observer-Regulator structure of $\K_i$ (right) from \eqref{agent:all}.
		Agent $i$ updates its local state $\xi_{i,\underline i}$ and computes its local input $u_i$ using the local measurement $y_i$ and the states $\xi_{k,\underline k}$ and partial inputs $\tilde v_{k,\underline k}$ from  strict ancestors $k\in \bar{\bar{i}}$.
		For simplicity, we used the following notation for the estimator dynamics: $\mathcal{T}_{\underline{ii}} \defeq  (sI-A_{\underline{ii}}-E_{n_{\underline{i}}}^{\tp} E_{n_i} L^i C_{2_{ii}} E_{n_{i}}^{\tp} E_{n_{\underline{i}}})^{-1}$.}
	\label{fig:local} 
\end{figure*}

A similar structure can be derived for the coordinate choice of Corollary~\ref{corr}, which we omit.

\paragraph{Minimal realization.} \label{sec:minimal}
As pointed out in Remark~\ref{rem:minimality} and above, the realizations obtained in Theorem~\ref{thm:2} and Corollary~\ref{corr} are not minimal.
The agent-level implementation encapsulates that each node $i$ doesn't transmit information about the $[N]\setminus \underline{i}$ nodes. These correspond to the unobservable modes that can be removed. Therefore, the reduced realizations are:
\begin{align*}
	\K&=\left[\begin{array}{c|c}%
		{\bar{E}_{\underline{n}}}^\tp (\bar{A}+\bar{L}\bar{C}+\bar{B}\bar{S}_m\bar{F}\bar{S}_n^{-1}){\bar{E}_{\underline{n}}}
		&-{\bar{E}_{\underline{n}}}^\tp\bar{L}\bar{\1}_p\\ \hlinet
		\bar{\1}_m^\tp\bar{F}\bar{S}_n^{-1}{\bar{E}_{\underline{n}}}&0
	\end{array}\right] \\
	\K&=\left[\begin{array}{c|c}
		{\bar{E}_{\underline{n}}}^\tp (\bar{A}+\bar{S}_n^{-1}\bar{L}\bar{S}_p\bar{C}+\bar{B}\bar{F}){\bar{E}_{\underline{n}}} & -{\bar{E}_{\underline{n}}}^\tp \bar{S}_n^{-1}\bar{L}\bar{\1}_p\\\hlinet
		\bar{\1}_m^\tp \bar{F}{\bar{E}_{\underline{n}}} &0
	\end{array}\right],
\end{align*}
where ${\bar{E}_{\underline{n}}} = \blkdiag(\{E_{n_{\underline{i}}}\})$.
In both realizations above, the total number of states is  $\sum_{i=1}^{N}n_{\underline{i}}$ and the state transition matrix has block-sparsity conforming to $S$.  

\section{Conclusions}
\label{sec:conclufuture}
In this work, our starting point was the explicit solution to the output-feedback dynamically decoupled cooperative control problem found by Kim and Lall~\cite{kim2015explicit}. We obtained an equivalent and compact realization of the optimal controller, which enabled us to derive agent-level implementations for the optimal control laws and to characterize precisely what each agent needs to receive, compute, and transmit.

\appendix

\section{Proof of Theorem~\ref{thm:2}}\label{sec:appendix}

In order to manipulate the expression for $\K$ in Lemma~\ref{lem:1}, we begin by finding a realization for $\Pp$ by aggregating the individual $\Pp_{\underline{i}i}$ given in~\eqref{eq:Pdef}. For ease of exposition, we augment the states of each $\Pp_{\underline{i}i}$ by adding modes that are simultaneously uncontrollable and unobservable. Specifically, we augment each $A_{\underline{ii}}$ to $A$ and likewise with $B_{2_{\underline{ii}}}$ to $B_2$ and $C_{2_{i\underline{i}}}$ to $C_2$.
For $i \in [N]$, we also augment the $F^i$ by zero-padding to obtain $F_i \defeq E_{m_{\underline{i}}}F^i E_{n_{\underline{i}}}^\tp$. The estimation gains are similarly zero-padded to obtain $\tilde{L} \defeq \blkdiag(\{E_{n_i}L^i E_{p_i}^\tp\})$. So finally, we obtain
\begin{equation}
\begin{aligned}
\Pp={\left[\begin{array}{c|c}%
	\bar{A}+\bar{B}\bar{F}+\tilde{L}\bar{C}& -\tilde{L}\bar{\1}_p\\\hlinet \bar{\1}_m^\tp \bar{F}&0\end{array}\right]}.
\end{aligned}
\end{equation}
The input and output matrices, $-\tilde{L}\bar{\1}_p$ and $\bar{\1}_m^\tp \bar{F}$ respectively, also arose by zero-padding so that all the modes we added to $\Pp$ are both uncontrollable and unobservable.
Now consider the related transfer matrix
\begin{equation}
\begin{aligned}
\Pp_2\defeq{\left[\begin{array}{c|c}%
	\bar{A}+\bar{B}\bar{F}+{L}_x\bar{C}& -\tilde{L}\bar{\1}_p\\\hlinet \bar{\1}_m^\tp \bar{F}&0\end{array}\right]},
\end{aligned}
\end{equation}
where $L_x\defeq\bar{S}_n^{-1}\bar{L}(\bar{S}_p-\bar{\1}_p(\bar{\1}_{p}^\tp -\diag(\bar{\1}_{p}^\tp )))$. We will now prove that $\Pp(s) = \Pp_2(s)$, which can be done by explicitly computing the transfer matrices and performing appropriate algebraic manipulations. Beginning with $\Pp_2$,
\begin{multline}
	\Pp_2(s) = -\bar{\1}_m^\tp \bar{F}\bigl( s\bar{I}_n-\bar{A}-\bar{B}\bar{F}-\bar{S}_n^{-1}\bar{L}\bar{S}_p\bar{C}\\
	+\bar{S}_n^{-1}\bar{L}\bar{\1}_p\bar{\1}_{p}^\tp \bar{C}-\bar{S}_n^{-1}\bar{L}\bar{\1}_p\diag(\bar{\1}_{p}^\tp )\bar{C}\bigr)^{-1}\tilde{L}\bar{\1}_p,
\end{multline}
where $\bar{I}_n \defeq I_N\otimes I_n = I_{Nn}$.
By the sparsity pattern of $\tilde L$ and $\bar{S}_n$, we can verify that $\bar{S}_n\tilde{L}\bar{\1}_p=\bar{L}\bar{\1}_p$. Therefore,
\begin{multline*}
	\Pp_2(s) = -\bar{\1}_m^\tp \bar{F}\bigl( s\bar{I}_n-\bar{A}-\bar{B}\bar{F}-\bar{S}_n^{-1}\bar{L}\bar{S}_p\bar{C}\\+\tilde{L}\bar{\1}_p\bar{\1}_{p}^\tp \bar{C}-\tilde{L}\bar{\1}_p\diag(\bar{\1}_{p}^\tp )\bar{C}\bigr)^{-1}\tilde{L}\bar{\1}_p.
\end{multline*}
Now, $\tilde{L}\bar{\1}_p\diag({\bar{\1}_{p}^{\tp}})\bar{C} = \tilde{L}\bar{C}$. Defining $Z_1 \defeq (s\bar{I}_n-\bar{A}-\bar{B}\bar{F}-\tilde{L}\bar{C})$ and $Z_2 \defeq -\bar{S}_n^{-1}\bar{L}\bar{S}_p\bar{C}+\tilde{L}\bar{\1}_p\bar{\1}_{p}^\tp \bar{C}$,
\[
\Pp_2(s) = -\bar{\1}_m^\tp \bar{F}\left( Z_1+Z_2 \right)^{-1}\tilde{L}\bar{\1}_p.
\]
Applying the Woodbury matrix identity, 
\begin{equation}\label{eq:Meq}
\Pp_2(s) = -\bar{\1}_m^\tp \bar{F}Z_1^{-1}\tilde{L}\bar{\1}_p+M=\Pp(s)+\mathcal{M}(s),
\end{equation}
where $\mathcal{M}(s)\defeq\bar{\1}_m^\tp \bar{F}Z_1^{-1}Z_2(\bar{I}_n+Z_1^{-1}Z_2)^{-1}Z_1^{-1}\tilde{L}\bar{\1}_p$.
The matrix $Z_1^{-1}$ is block-diagonal, where each $Z_{1_{ii}}$ is also block-diagonal. Partitioning the rows and columns according to observable and unobservable modes, respectively, we obtain the following block-sparsity patterns: $Z_1^{-1} \sim \sbmat{\star & 0 \\0 & \star}$, $Z_2 \sim \sbmat{0 & \star \\ 0 & \star}$, $\tilde{L}\bar{\1}_p \sim\sbmat{\star \\ 0}$, and $\bar{\1}_m^\tp \bar{F}\sim\sbmat{\star & 0}$. Substituting into the definition of $\mathcal{M}$, we conclude that $\mathcal{M}(s)=0$ and so from~\eqref{eq:Meq}, we have $\Pp_2(s)=\Pp(s)$.

Next, we substitute this expression for $\Pp_2$ into the expression for $\K$ in Lemma~\ref{lem:1}, one term at a time.
\[
\tilde{\Pp}\defeq\Pp-\diag(\Pp)=\left[\begin{array}{c|c}%
\bar{A}+\bar{B}\bar{F}+{L}_x\bar{C}& -\tilde{L}\bar{\1}_p\\
\hlinet F_\Delta &0
\end{array}\right],
\]
where $F_\Delta\defeq\bar{\1}_m^\tp \bar{F}-\diag(\bar{\1}_m^\tp \bar{F}) = (\bar{\1}_m^\tp -\diag(\bar{\1}_m^\tp ))\bar{F}$. 
We evaluate, $(I+\G\Pp - \diag(\G\Pp))^{-1} = (I+\G\tilde{\Pp})^{-1}=$
\[
\left[\begin{array}{cc|c}%
A&B_2F_\Delta&0\\\tilde{L}\bar{\1}_pC_2&\bar{A}+\bar{B}\bar{F}+{L}_x\bar{C}&-\tilde{L}\bar{\1}_p\\\hlinet -C_2&0&I\end{array}\right].
\]
Finally, $\K = \Pp(I+\G\tilde{\Pp})^{-1}=$
\[
\squeezemat{1pt}
\left[\begin{array}{ccc|c}%
\bar{A}+\bar{B}\bar{F}+{L}_x\bar{C}&\tilde{L}\bar{\1}_pC_2&0&-\tilde{L}\bar{\1}_p\\0&A&B_2F_\Delta&0\\0&\tilde{L}\bar{\1}_pC_2&\bar{A}+\bar{B}\bar{F}+{L}_x\bar{C}&-\tilde{L}\bar{\1}_p\\\hlinet \bar{\1}_m^\tp \bar{F} &0& 0 &0\end{array}\right]\!.
\]
Applying the similarity transform with $T_1 \defeq \sbmat{ \bar{I}_{n}& 0 & \bar{I}_{n}\\ 0 & I_{n} & 0\\0& 0 & \bar{I}_{n}}$ reveals an uncontrollable mode, which we eliminate:
\[
\K=\left[\begin{array}{cc|c}%
A&B_2F_\Delta&0\\\tilde{L}\bar{\1}_pC_2&\bar{A}+\bar{B}\bar{F}+{L}_x\bar{C}&-\tilde{L}\bar{\1}_p\\\hlinet 0& \bar{\1}_m^\tp \bar{F}& 0\end{array}\right]\!.
\]
Applying the similarity transform $T_2\defeq \sbmat{ I_{n}& 0\\ 0 & \bar{S}_n^{-1}}$ and using the relationships 
$\tilde{L}\bar{\1}_pC_2 = \tilde{L}\bar{C}\bar{\1}_n$, $\bar{S}_n\bar{A}=\bar{A}\bar{S}_n$, $\bar{S}_n\bar{B}=\bar{B}\bar{S}_m$, and  $\bar{C}\bar{S}_n^{-1}=\bar{S}_p^{-1}\bar{C}$, we obtain
\[
\K\!=\!
\squeezemat{2.1pt}
\left[\begin{array}{cc|c}%
A&B_2(\bar{\1}_m^\tp \!-\!\diag(\bar{\1}_m^\tp ))\bar{F}\bar{S}_n^{-1}&0\\\bar{S}_n\tilde{L}\bar{C}\bar{\1}_n&\bar{A}\!+\!\bar{B}\bar{S}_m\bar{F}\bar{S}_n^{-1}\!+\!\bar{S}_n{L}_x\bar{S}_p^{-1}\bar{C}& -\bar{S}_n\tilde{L}\bar{\1}_p\\\hlinet 0& \bar{\1}_m^\tp \bar{F}\bar{S}_n^{-1}&0\end{array}\right]\!.
\]
Finally, we apply  $T_3 \defeq \sbmat{ I_{n}& (\bar{\1}_n^\tp -\diag(\bar{\1}_n^\tp ))\bar{S}_n^{-1}\\ 0 & \bar{I}_{n}}$ and use the relationship
$\bar{S}_n\tilde{L}\bar{\1}_p = \bar{L}\bar{\1}_p$, eliminate the uncontrollable states, and obtain a reduced realization of the controller:
\begin{align}
\K ={\left[\begin{array}{c|c}%
	\begin{matrix}\bar{A}+\bar{L}\bar{C}+\bar{B}\bar{S}_m\bar{F}\bar{S}_n^{-1}\end{matrix}&\begin{matrix}-\bar{L}\bar{\1}_p\end{matrix}\\\hlinet {\begin{matrix}\bar{\1}_m^\tp \bar{F}\bar{S}_n^{-1}\end{matrix}}&0\end{array}\right]}.
		\tag*{\mbox{\raisebox{-5mm}{\qedhere}}}
\end{align}

\bibliographystyle{abbrv}
{\small\bibliography{optcont}}
	
\end{document}